\renewcommand{\mathbf}{\boldsymbol}
\newtheorem{theorem}{Theorem}
\newtheorem{lemma}{Lemma}
\newtheorem{definition}{Definition}
\def\monthname{\ifcase\month\or
  January\or February\or March\or April\or May\or June\or July\or
  August\or September\or October\or November\or December\fi}
\numberwithin{equation}{section}
\newcommand{\bge}{\begin{equation}}
\newcommand{\ene}{\end{equation}}
\newcommand{\BIT}{\begin{itemize}}
\newcommand{\EIT}{\end{itemize}}
\newcommand{\BNUM}{\begin{enumerate}}
\newcommand{\ENUM}{\end{enumerate}}
\def\mrm#1{\mathrm{#1}}
\def\reals{\mathbb{R}} 
\renewcommand{\exp}[1]{\operatorname{exp}\left(#1\right)} 
\def\E{E} 
\def\V{V} 
\def\P{P} 
\def\Parg#1{\P\left({#1}\right)}
\def\cov{\mrm{cov}} 
\def\Convprob{\overset{p}{\longrightarrow}} 
\def\plim{\mathrm{plim}} 
\newtheorem{corollary}{Corollary}
\newtheorem{remark}{Remark}
\begin{document}


\title{Estimation of Discrete Choice Models: A Machine Learning Approach\thanks{The paper has been developed as part of graduate studies requirements of the authors. We greatly appreciate the guidance of our primary advisors, Guido Imbens and Ali Yurukoglu, during all stages of the project. We also appreciate the help from Lanier Benkard and Paulo Somaini. Many useful comments from the participants of 2018 California Econometrics Conference, Stanford IO reading group, and Facebook Economics reading group shaped the final version of the paper. All remaining mistakes and typos are our own.}}


\author{Nick Doudchenko\footnote{New York, NY 10011, \texttt{ndoudchenko@gmail.com}}\and Evgeni Drynkin\footnote{Menlo Park, CA 94025, \texttt{e.drynkin@gmail.com}}}

\maketitle

\begin{abstract}
In this paper we propose a new method of estimation for discrete choice demand models when individual level data are available. The method employs a two-step procedure. Step 1 predicts the choice probabilities as functions of the observed individual level characteristics. Step 2 estimates the structural parameters of the model using the estimated choice probabilities at a particular point of interest and the moment restrictions. In essence, the method uses nonparametric approximation (followed by) moment estimation. Hence the name---NAME. We use simulations to compare the performance of NAME with the standard methodology. We find that our method improves precision as well as convergence time. We supplement the analysis by providing the large sample properties of the proposed estimator.
\end{abstract}

\textbf{Keywords:} demand estimation, discrete choice, random coefficients, prediction, machine learning

\section{Introduction}
The primary goal of applied economic research is informing policy decisions. In many practically important cases experimental analysis is infeasible or prohibitively expensive. In those cases researchers have to rely on other sources of identification. Moreover, some applications require extrapolating outside of the support of the observed data which limits the applicability of highly flexible predictive models which are likely to suffer from overfitting. For instance, the analysis of a potential merger might require predicting the quantities and prices after two firms in a three-firm industry merge. It is quite possible that within the relevant time span the industry has always consisted of three firms. As a result, the training set would not contain the data necessary to obtain a model capable of accurate counterfactual predictions. Traditionally researchers would rely on theory to specify a functional form and extrapolate using the relevant counterfactual values---for instance, three firms instead of two. These issues have limited the adoption of purely predictive methods from statistical and machine learning in economic research.

However, in some cases parts of the overall empirical strategy can be posed as predictions problems. In that case the full force of machine learning methods can be invoked providing flexibility of the functional form and computational efficiency.\footnote{For a review of commonly used prediction methods see, for example, \cite*{friedman2001elements} and \cite*{Murphy2012MachineL}.} In this paper we consider discrete choice demand estimation \citep*{mcfadden1973conditional, berry1995automobile, nevo2000practitioner} with coefficients that depend on observable individual level characteristics. The standard approach utilized in the literature is to assume a specific parametric form of the dependence on the individual level variables. We propose an alternative two-step procedure. First, we solve a prediction problem that links the individual level variables to the choice probabilities. Second, we estimate the standard discrete choice model to find the coefficients at pre-defined values of the covariates. This allows us to obtain the values of the coefficients at any other points by solving a system of linear equations.

The standard approach relies heavily on either correctly specifying the functional form or using a functional form that doesn't lead to a high bias or variance. It doesn't take into account the potential structure of the space, such as, for example, sparsity, either. Another issue with this approach is that it may become computationally burdensome when the dimensionality of the individual level characteristics increases. Using a prediction approach can address all of these issues.

We compare the proposed method to the more common approach by simulating the distribution of estimated elasticities. We consider two variations of the standard procedure: (i) the case when the parametric form of the coefficient as a function of the individual level variable is misspecified, and (ii) the case when it is specified correctly (the oracle case). When compared in terms of the root-mean-square error our method performs about 65\% better than the oracle\footnote{The main reason why our method is more precise than the one based on the correct specification is the complex optimization needed to implement oracle estimation. For some realizations of the data this leads to the divergence of the optimization routine. This can serve as some evidence that despite the fact that asymptotically the behavior of the algorithms should coincide, one should lean towards utilizing NAME.} while the misspecified estimation performs almost 100\% worse than the oracle. Perhaps the biggest advantage of the proposed procedure is its computational efficiency. We find that in the simplest case---when the individual level characteristic is a scalar---our method converges almost twice as fast as the considered alternatives. The efficiency gains become more significant when the dimensionality increases.

We also provide theoretical results that justify the use of the proposed estimator---it is consistent and has the same asymptotic distribution as the oracle estimator.

\section{Related Literature}
The effects of consumer specific tastes have been found to be essential for a variety of industries
including among others: cable television \cite{crawford2012welfare}, new cars \cite{doi:10.1086/379939}, alcohol beverages \cite{griffith2019tax}, and soda \cite{dubois2017well}. \cite*{berry2014identification} and \cite*{dunker2017nonparametric} address the problem of nonparametric identification in discrete choice demand models and \cite*{compiani2018nonparametric} proposes a specific way to estimate a model given the data commonly available in applications. 

A study somewhat related to ours is \cite*{gillen2014demand} that uses ideas similar to those of \cite*{belloni2014inference} and \cite*{farrell2015robust} in the context of demand estimation when the space of product-level characteristics is high-dimensional and sparse. \cite*{gillen2015blp} study the issue of selection from a set of demographic variables to be included in demand estimation. \cite*{athey2018estimating} use individual level data and machine learning methods to estimate demand for restaurants and travel time.

There are a number of studies that apply ideas from machine learning to causal inference. See, for example, \cite*{hartford2016counterfactual, wager2017estimation,athey2016recursive,fan2012adaptive,belloni2011lasso, doudchenko2016balancing, gautier2011high,hansen2014instrumental,chernozhukov2015valid, bloniarz2016lasso,athey2016efficient,chernozhukov2017double,belloni2011inference,belloni2014high,belloni2012sparse}.

\section{Model}
Consider the following discrete choice setting. There are $M$ separate markets populated by $N_m$ individuals for $m=1,\dots,M$. Each individual $i=1,\dots,N_m$ is characterized by a set of observable covariates $Z_{im}\in\reals^p$ and her product choice $d_{im}\in\{0,1,\dots,J\}$, where $J$ is the number of products available in each market\footnote{It is straightforward to generalize to the case when each market has a separate set of $J_m$ products.} and $d_{im}=0$ corresponds to the outside good. Product $j=1,\dots,J$ is characterized by an observable variable $X_{jm}\in\reals^k$ and an unobservable variable $\xi_{jm}(Z_{im})\in\reals$. The dependence of the unobserved quality on individual level characteristics is a plausible modeling assumption. For example, it can reflect market segmentation across the products over different socio-demographics groups. Those differences can be a result of concentration of ad spendings or product positioning to a particular age, income, or any other cohort. We also assume the existence of a set of instrumental variables $W_{jm}\in\reals^l$ that are uncorrelated with $\xi_{jm}(Z_{im})$. Utility $u_{ijm}$ derived by individual $i$ from buying good $j=0,\dots,J$ in market $m$ is given by
\begin{eqnarray*}
  u_{ijm} &=& f(\theta(Z_i);X_{jm},\xi_{jm}(Z_i),\zeta_{ijm}),
\end{eqnarray*}
\noindent where $\zeta_{ijm}$ is an idiosyncratic error. For example, in \cite{berry1994estimating}:
\begin{eqnarray*}
  u_{ijm} &=& \theta X_{jm} + \xi_{jm} + \varepsilon_{ijm},
\end{eqnarray*}
\noindent so $\zeta = \varepsilon$ is an additive term with a Generalized Extreme Value Type-I (Gumbel) distribution, while in \cite*{berry1995automobile}: 
\begin{eqnarray*}
  u_{ijm} &=& (\theta + \nu_{im}) X_{jm} + \xi_{jm} + \varepsilon_{ijm},
\end{eqnarray*}
\noindent so it has an extra component additive to $\theta(Z_i)$, meaning $\zeta = (\varepsilon,\nu)$.

We assume that each person chooses the good that provides the highest level of utility in which case the probability that individual $i$ chooses good $j=1,\dots,J$ in market $m$ is: 
\begin{multline*}
	s_{jm}(Z_{im}) = \\
	\int\mathbb{I}\Big[f(\theta(Z_{im});X_{jm},\xi_{jm}(Z_{im}),\zeta_{ijm})= \\
	\max_{j'}\{f(\theta(Z_{im});X_{j'm},\xi_{j'm}(Z_{im}),\zeta_{ij'm})\}\Big]\,dF_{\zeta}(\zeta_{im}).
\end{multline*}
For example, in \cite{berry1994estimating} this probability is:
\begin{eqnarray*}
  s_{jm}(Z_{im}) &=& \frac{\exp{\beta(Z_{im})^TX_{jm} - \alpha(Z_{im}) P_{jm} + \xi_{jm}(Z_{im})}}{1 + \sum_{j'}\exp{\beta(Z_{im})^TX_{j'm} - \alpha(Z_{im}) P_{j'm} + \xi_{j'm}(Z_{im})}}.
\end{eqnarray*}
\noindent Averaging across individuals we obtain the market shares,
\begin{eqnarray*}
  s_{jm} &=& \frac{1}{N_m}\sum_{i=1}^{N_m}s_{jm}(Z_{im}).
\end{eqnarray*}

\noindent Finally, identification comes form the moment condition:
\begin{eqnarray*}
	\E_{jm}\left[H(\theta(Z_{im}); \xi_{jm}(Z_{im}), X_{jm}, W_{jm})|Z_{im}\right] = 0,\quad \forall Z_{im}.
\end{eqnarray*}

\section{Estimation}
\subsection{Standard Approaches}
There are two standard approaches in the literature. The first approach goes back to \cite{doi:10.1086/379939} and assumes that: (i) $\theta(Z) = g(Z,\gamma)$, (ii) $\xi(Z) = \xi$. With these two assumptions the following algorithm is used:
\begin{enumerate}
  \item Specify the parametric functional form of $\theta(Z)=g(Z,\gamma)$, where $g$ is known and $\gamma$ is a parameter.
  \item Initialize the parameters to be estimated: $\gamma=\gamma_0$.
  \item Iterate $\gamma$ until convergence. At step $n$:
  \begin{enumerate}
    \item Solve for $\xi_{jm}$, $j=1,\dots,J$, $m=1,\dots,M$ by inverting the market shares---find $\hat{\xi}_{jm}$ such that the implied market shares $s_{jm}(\gamma_n,\hat{\xi}_{jm})$ coincide with the observed market shares $s_{jm}$ (BLP inversion).
    \item Compute the loss function $L(\gamma_n,\hat{\xi}_{jm})$ (usually based on generalized method of moments/minimum distance estimation).
    \item Update the parameters $\gamma=\gamma_{n+1}$.
  \end{enumerate}
\end{enumerate}

\noindent An alternative approach, often called bunching,
\footnote{This idea is used in \cite{dubois2017well} and
\cite{dubois2017well} among others.} is to break the data into several pieces based on the values of $Z$. This approach typically does not assume $\xi(Z)=\xi$, however it is restrictive in a sense that $\theta(Z)$ and $\xi(Z)$ are piece-wise constant. The estimation proceeds as follows:
\begin{enumerate}
	\item Specify the bunching regions $\mathcal{Z}_1,\dots,\mathcal{Z}_K$.
	\item For every $k=1,\dots,K$ run the following estimation procedure:
	\begin{enumerate}
		\item Restrict the data to a subsample such that $Z_{im}\in\mathcal{Z}_k$.
		\item Assume that $\theta(Z)=\theta_k$ and $\xi(Z)=\xi_k$ for any $Z\in\mathcal{Z}_k$.
		\item Iterate $\theta_k$ until convergence. At step $n$:
		\begin{enumerate}
			\item Find $\hat{\xi}_{k,jm}$ so that $s_{jm}(\theta_k,\hat{\xi}_{k,jm}) = \hat{s}_{k,jm}$ (BLP inversion).
			\item Update $\theta_k$ to match the moment $\hat{H}(\theta_{k}, \hat{\xi},X,W)$ better.
		\end{enumerate}
	\end{enumerate}
\end{enumerate}


\subsection{Proposed Method}
Both methods described above have limitations. \cite{doi:10.1086/379939}, though continuous in nature, limits the flexibility of unobserved heterogeneity. In addition, as the dimensionality of $\gamma$ grows, the number of moment restrictions provided by $H(\cdot)$ may be insufficient. The researcher thus have to add extra moment conditions which may not be valid. A default option for the moments is the difference between the implied and observed correlation between $Z$ and the characteristics of the $i$'s choice, that is $X_{d_im}$. Adding moments harms estimation in two ways. First, it can lower the precision and introduce the biases if those moments are misspecified. The misspecification can occur due to $g(Z,\gamma)$ not being sufficiently flexible. By increasing flexibility, however, the researcher introduces more and more variables to estimate, slowing the estimation routine.

Bunching, in contrast, is a very scalable approach that requires only a handful of variable transformations and moment restrictions to estimate the parameters of interest. It is also a more flexible approach when it comes to the form of $\xi(Z)$. However, first, bunching is inherently discontinuous. It can be problematic, when the researcher wants to identify subpopulations that respond differently to a particular policy. For example, though it seems natural in most cases that 40 and 41 year olds would respond somewhat similar to a change in the market conditions, this approach might predict very different reactions if those ages fall into different bunching groups. Second, the grouping is a somewhat arbitrary decision often made by the researcher without any fundamental reason. Moreover, if the dimensionality of $Z$ is reasonably high, each group risks to either have only a few observations or be too broad. Though there is a wide variety of machine learning algorithms designed to handle high dimensional prediction problems, it is not immediately clear how to combine them with the bunching approach.

This paper combines modern machine learning approaches with the bunching idea in a specific way that allows one to generate continuous $\theta(Z)$ and $\gamma(Z)$, add any prior assumptions on their shape (for example, sparsity), and use modern techniques to soften the curse of dimensionality issue.

The method we propose in this paper is based on the following idea. If in each unit of observation the market shares for a given $Z=Z_0$ were known, the problem would reduce to a simple discrete choice estimation. The procedure would be similar to the bunching approach described above. In fact, bunching produces predictions for $Z=Z_0$ under the assumption of piece-wise constant functions. As $s_{jm}(Z_0)$ are unobserved, we attempt to estimate them using the individual level data, $(Z_{im},d_{im})$.\footnote{Another issue is estimating $\beta(Z)$ at $Z\neq Z_0$. We address this in the next section.}

We propose the following algorithm, which we call NAME:
\begin{enumerate}
  \item Use a prediction method of choice to fit $s(z,j,m) = \Parg{d_{im}=j\vert Z_{im} = z}$.
  \item Pick $Z_0$ within the support of $Z$.\footnote{The optimal choice of $Z_0$ is an important question that we leave for future research. In the simulations we use the median of the observed values for every dimension of $Z$.}
  \item Obtain the predictions $\hat{s}(Z_0,j,m)$ for every $j=1,\dots,J$ and $m=1,\dots,M$.
  \item Initialize the parameters to be estimated: $\theta=\theta_0$.
  \item Iterate $\theta$ until convergence. At step $n$:
  \begin{enumerate}
    \item Solve for $\xi_{jm}$, $j=1,\dots,J$, $m=1,\dots,M$ by inverting the predicted market shares---find $\hat{\xi}_{jm}$ such that the implied market shares $s_{jm}(\theta_n,\hat{\xi}_{jm})$ coincide with the predicted market shares $\hat{s}_{jm}$.
    \item Compute the loss function $L(\theta_n,\hat{\xi}_{jm})$.
    \item Update the parameters $\theta=\theta_{n+1}$.
  \end{enumerate} 
\end{enumerate}
It can be seen that NAME is a generalization of the bunching. In fact, if $Z$ can only take on finitely many points, the two coincide. A difference between our algorithm and \cite{doi:10.1086/379939} is slightly more subtle. The main advantage in the performance comes from the fact that NAME predicts the individual choices once, while this step is nested inside the optimization routine in \cite{doi:10.1086/379939}. The difference can be seen in Figure \ref{fig:schematic_comparison}.
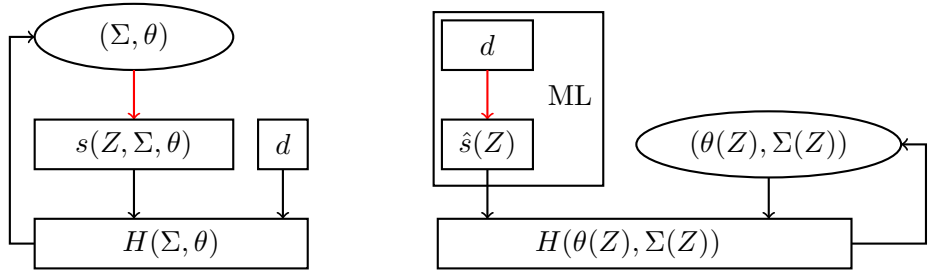
\begin{figure}
	\begin{center}
		\begin{tikzpicture}[scale=1.1]
			\draw[thick] (0,0) ellipse (1.2 and 0.4);
			\node at (0,0) {\small $(\Sigma, \theta)$};
			\draw[thick, ->, red] (0,-0.4) -- (0,-1);
			\draw[thick] (-1.2,-1) rectangle (1.2,-1.6);
			\node at (0,-1.3) {\small $s(Z,\Sigma,\theta)$};
			\draw[thick] (1.5,-1) rectangle (2.1,-1.6);
			\node at (1.8,-1.3) {\small $d$};
			\draw[thick, ->] (0,-1.6) -- (0,-2.2);
			\draw[thick, ->] (1.8,-1.6) -- (1.8,-2.2);
			\draw[thick] (-1.2,-2.2) rectangle (2.1,-2.8);
			\node at (0.45,-2.5) {\small $H(\Sigma,\theta)$};
			\draw[thick, ->] (-1.2,-2.5) -- (-1.5,-2.5) -- (-1.5,0) -- (-1.2,0);
		\end{tikzpicture}
		\hspace{0.08\textwidth}
		\begin{tikzpicture}[scale=1.1]
			\draw[thick] (0.55,-0.3) rectangle (-0.55,0.3);
			\node at (0,0) {\small $d$};
			\draw[thick, red, ->] (0,-0.3) -- (0,-0.9);
			\draw[thick] (0.55,-0.9) rectangle (-0.55,-1.5);
			\node at (0,-1.2) {\small $\hat{s}(Z)$};
			\draw[thick] (-0.65, 0.4) rectangle (1.4, -1.7);
			\node at (1.0,-0.6) {\small ML};
			\draw[thick, ->] (0,-1.5) -- (0,-2.1);
			\draw[thick] (3.4,-1.2) ellipse (1.6 and 0.4);
			\node at (3.4,-1.2) {\small $(\theta(Z), \Sigma(Z))$};
			\draw[thick, ->] (3.4,-1.6) -- (3.4,-2.1);
			\draw[thick] (-0.6,-2.1) rectangle (4.4,-2.7);
			\node at (1.7,-2.4) {\small $H(\theta(Z), \Sigma(Z))$};
			\draw[thick, ->] (4.4,-2.4) -- (5.3,-2.4) -- (5.3,-1.2) -- (5,-1.2);
		\end{tikzpicture}
		\caption[]{Graphical comparison of BLP (left panel) and NAME (right panel). Computationally heavy part of forming the predicted shares is shown by the red arrow. In BLP this step is redone on every loop of the optimization by inferring the shares from the structural equations. In NAME, in contrast, this step is only done once outside of the optimization over structural parameters.}
		\label{fig:schematic_comparison}
	\end{center}
\end{figure}

Next, we need to explain how to extend the results from a single $Z_0$ to an arbitrary $Z$. To do so, we assume that a researcher ran a procedure described above multiple times for a number of  values of $Z$'s. As a result, she ends up with a set of $\{(Z_1,\theta_1,\xi_1),\dots,(Z_K,\theta_K,\xi_K)\}$. This set can be used to run a learning algorithm of $\theta$'s and $\xi$'s on $Z$'s, leading to full functions $\theta(Z)$ and $\xi(Z)$. For special cases, one can derive alternative and more computationally efficient ways to extend from $Z_0$ to arbitrary $Z$. We now explore some of the special cases.

\subsubsection{Special case 1: \cite{berry1994estimating}}
So far we have estimated $\hat{\beta}(Z)$ for a single value of $Z=Z_0$. However, $\hat{\beta}(Z)$ can be easily derived from $\hat{s}(Z,j,m)$ by solving a system of linear equations. As before, let $\beta = \beta(Z_0)$. Additionally, define $c_{jm}(Z)=\left(\theta(Z) - \theta\right)^TX_{jm}$. Then, $\theta(Z)^TX_{jm}=\theta^TX_{jm} + c_{jm}(Z)$ and
\begin{eqnarray*}
  s_{jm}(Z_{im}) &=& \dfrac{w_{jm}(Z_{im})\exp{\theta^TX_{jm}+\xi_{jm}}}{1 + \sum_{j'}w_{j'm}(Z_{im})\exp{\theta^TX_{j'm}+\xi_{j'm}}},
\end{eqnarray*}
where $w_{jm}(Z) = \exp{c_{jm}(Z)}$. 

Note that the set of equations
\begin{eqnarray*}
\hat{s}_{jm}(Z_0) &=& \frac{\exp{\hat{\theta}^TX_{jm}+\hat{\xi}_{jm}}}{1 + \sum_{j'}\exp{\hat{\theta}^TX_{j'm}+\hat{\xi}_{j'm}}}
\end{eqnarray*}
for $j=1,\dots,J$ and $m=1,\dots,M$ uniquely defines the values of $\exp{\hat{\theta}^TX_j+\hat{\xi}_{jm}}$ for all $j$ and $m$. Let their estimates produced from $\hat{s}_{jm}(Z_0)$ be $\hat{E}_{jm}$. Then, we have the following system of linear equations:
\begin{eqnarray*}
  \hat{s}_{jm}(Z) + \hat{s}_{jm}(Z)\sum_{j'=1}^J\hat{w}_{j'm}(Z)\hat{E}_{j'm} &=& \hat{w}_{jm}(Z)\hat{E}_{jm}
\end{eqnarray*}

\noindent We can solve for $\hat{w}_{jm}(Z)$, which is equivalent to $\hat{c}_{jm}(Z)$. Finally, once we have $\hat{c}_{jm}(Z)$ for all $j$ and $m$, we can use these values to recover $\hat{\beta}(Z)$ by regressing $\hat{c}_{jm}(Z)$ on $X_{jm}$.

\subsubsection{Special case 2: \cite{doi:10.1086/379939}}
The overall idea is similar to the case of \cite{berry1994estimating} with the only difference being that the resulting system of equations for $\hat{w}_{jm}(Z)$ is not linear. To overcome this problem, one can use BLP contraction mapping as an efficient way of solving a resulting system. Indeed,
\begin{eqnarray*}
	s_{jm}(Z) &=& \int\dfrac{\exp{c_{jm}(Z)+\theta^TX_{jm}+\nu_iX_{jmt}+\xi_{jm}}}{1 + \sum_{j'}\exp{c_{j'm}(Z)+\theta^TX_{j'm}+\nu_iX_{j'm}+\xi_{j'm}}}\,dF(\nu_i).
\end{eqnarray*}
Plugging in the estimated values, we obtain the following system on $c_{jm}(Z_i)$:
\begin{eqnarray*}
	\hat{s}_{jm}(Z) &=& \int\dfrac{\exp{c_{jm}(Z)+\hat{\theta}(Z_0)^TX_{jm}+\nu_iX_{jmt}+\hat{\xi}_{jm}}}{1 + \sum_{j'}\exp{c_{j'm}(Z)+\hat{\theta}(Z_0)^TX_{j'm}+\nu_iX_{j'm}+\hat{\xi}_{j'm}}}\,d\hat{F}(\nu_i).
\end{eqnarray*}
\cite{berry1995automobile} show that the mapping:
\begin{eqnarray*}
	x\mapsto \left(\int\dfrac{\exp{x_j+r_j+v_{ij}}}{1+\sum_{j'}\exp{x_{j'}+r_{j'}+v_{ij'}}}\,dF(v_i)\right)_{j=1}^J
\end{eqnarray*}
is a contraction mapping. As a result, one can solve the system for $c_{jm}(Z)$ efficiently by applying the BLP-contraction mapping iteratively.

\section{Consistency}

The first theoretical result we present is the consistency of the estimator $\hat{\theta}(Z)$. Informally, the result states that if we have any consistent estimator of $s_{jm}(Z_0)$, the resulting procedure leads to a consistent estimator of $\theta(Z_0)$. Consequently, many of the known flexible machine learning algorithms for non-parametric probability estimation result in consistent estimators of $\theta(Z)$.

\begin{theorem}\label{thm:1}
  If $\hat{s}_{m}(Z_0)\Convprob s_{m}(Z_0)$ uniformly over $m = 1,\dots,M$ as $M\longrightarrow\infty$, then $\hat{\theta}_M(Z_0)\Convprob\theta(Z_0)$.
\end{theorem}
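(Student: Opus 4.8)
The plan is to treat $\hat\theta_M(Z_0)$ as a standard extremum (minimum-distance / GMM) estimator and to prove consistency by the usual argmin argument, the only non-standard ingredient being the first-stage plug-in of the estimated shares. Write $\theta_0=\theta(Z_0)$, fix a compact parameter set $\Theta\ni\theta_0$, and let $\xi_{jm}(\theta,s)$ denote the unobservable recovered by BLP inversion, i.e. the unique value at which the model-implied share at parameter $\theta$ equals $s$. The estimator minimizes
\begin{equation*}
  Q_M(\theta)=\bar H_M(\theta)^{T}A_M\,\bar H_M(\theta),\qquad \bar H_M(\theta)=\frac1M\sum_{m=1}^M H\bigl(\theta;\xi_m(\theta,\hat s_m(Z_0)),X_m,W_m\bigr),
\end{equation*}
with $H$ vector-valued (stacked over products), while the infeasible analog $\tilde Q_M(\theta)$ replaces $\hat s_m(Z_0)$ by the truth $s_m(Z_0)$. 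Denoting the population objective by $Q_0(\theta)$, I would show (i) $\sup_{\theta\in\Theta}\lvert Q_M(\theta)-Q_0(\theta)\rvert\Convprob 0$ and (ii) $Q_0$ is uniquely minimized at $\theta_0$, and then invoke the standard consistency theorem for extremum estimators.

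The crux is controlling the first-stage error uniformly. First I would establish that $s\mapsto\xi_{jm}(\theta,s)$ is Lipschitz in $s$, uniformly over $\theta\in\Theta$ and over $m$. This follows from the inverse-function theorem applied to the share map $\xi\mapsto s(\theta,\xi)$: the excerpt already records (from \cite{berry1995automobile}) that this map is a contraction, hence invertible with a continuous inverse, and a Jacobian bounded away from singularity yields a uniform Lipschitz constant $L$. Consequently
\begin{equation*}
  \sup_{\theta\in\Theta}\ \max_{1\le m\le M}\bigl\lVert \xi_m(\theta,\hat s_m(Z_0))-\xi_m(\theta,s_m(Z_0))\bigr\rVert\ \le\ L\,\max_{1\le m\le M}\bigl\lVert \hat s_m(Z_0)-s_m(Z_0)\bigr\rVert\ \Convprob\ 0,
\end{equation*}
by the assumed uniform-in-$m$ consistency of the shares. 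Continuity of $H$ in $\xi$ together with a dominating envelope then upgrades this to $\sup_{\theta\in\Theta}\lVert\bar H_M(\theta)-\tilde H_M(\theta)\rVert\Convprob 0$, hence $\sup_{\theta\in\Theta}\lvert Q_M(\theta)-\tilde Q_M(\theta)\rvert\Convprob 0$.

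The remaining pieces are routine. A uniform law of large numbers over markets (using compactness of $\Theta$, continuity of $\theta\mapsto H$, and an integrable envelope) gives $\sup_{\theta\in\Theta}\lvert\tilde Q_M(\theta)-Q_0(\theta)\rvert\Convprob 0$, where $Q_0(\theta)=\bar H(\theta)^{T}A\,\bar H(\theta)$, $A$ being the positive-definite probability limit of $A_M$ and $\bar H(\theta)=\lim_M\frac1M\sum_m\E[H(\theta;\xi_m(\theta,s_m),X_m,W_m)]$. The model's moment restriction forces $\bar H(\theta_0)=0$, and the maintained identification assumption (that $\theta_0$ is the unique zero of $\bar H$) makes $\theta_0$ the well-separated unique minimizer of $Q_0$. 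Combining the two uniform-convergence displays gives $\sup_{\theta\in\Theta}\lvert Q_M(\theta)-Q_0(\theta)\rvert\Convprob 0$, and the argmin theorem then delivers $\hat\theta_M(Z_0)\Convprob\theta_0$.

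I expect the main obstacle to be the uniformity of the inversion bound rather than the extremum-estimator machinery. The Lipschitz constant $L$ is governed by the inverse Jacobian of the share map, which degrades as shares approach the boundary of the simplex (some $s_{jm}\to 0$, or $\sum_j s_{jm}\to 1$); keeping $L$ uniform over $m$ therefore requires the true shares — and, with probability approaching one, the estimates $\hat s_m(Z_0)$ — to stay bounded away from that boundary uniformly in $m$. Verifying this boundedness, and exhibiting a single integrable envelope for $H$ valid across all markets, is where the real work lies; the rest is a transcription of the Newey--McFadden consistency argument.
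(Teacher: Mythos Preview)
Your proposal is correct and follows essentially the same route as the paper's own proof: both argue that continuity of the BLP inversion transfers the uniform-in-$m$ share consistency to uniform consistency of the inverted $\xi$'s, then invoke continuity of $H$ in $\xi$ together with the maintained identification assumption to conclude via an extremum/argmin argument. Your version is more carefully stated---Lipschitz rather than bare continuity, explicit uniformity over $\theta\in\Theta$, explicit ULLN and argmin theorem, and a flag for the boundary-of-simplex issue---but the architecture is identical.
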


\noindent The proof is given in Appendix \ref{app:proof:thm:1}. There are a number of estimation procedures that guarantee the consistency required by Theorem \ref{thm:1}. The following result provides several sufficient conditions.

\begin{lemma}
The following procedures provide a (uniformly over $m$) consistent estimator of $s_{jm}(Z_0)$:
\begin{enumerate}[label=(\roman*)]
  \item Kernel ridge regression with the tuning parameter approaching zero.
  \item Sieve estimator with the limiting space containing $s(Z)$.
  \item AdaBoost with the appropriate stopping rule.
  \item Sufficiently flexible neural network.
\end{enumerate}
\end{lemma}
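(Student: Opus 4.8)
The plan is to treat the estimation of each choice probability $s(z,j,m)=\Parg{d_{im}=j\mid Z_{im}=z}$ as a nonparametric regression of the label indicator $\mathbb{I}(d_{im}=j)$ on the covariate $Z$, pooling the individual-level observations across markets and carrying the market index (equivalently, the market-level features $X_{\cdot m},W_{\cdot m}$) as additional inputs. Under this reframing the four items are not four separate theorems but four instances of one template: each named procedure is a known universally consistent---or consistent under its stated side condition---nonparametric estimator of a conditional-mean function, and the target choice probability is exactly such a conditional mean, since $\E[\mathbb{I}(d_{im}=j)\mid Z]=s(Z,j,m)$ and this target is bounded in $[0,1]$. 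I would therefore organize the argument as (a) a short reduction casting conditional choice-probability estimation as conditional-mean (one-vs-rest regression) estimation; (b) a citation-plus-verification of consistency for each method against its standard regularity conditions; and (c) an upgrade from the native mode of convergence of each method to the uniform-over-$m$ statement that Theorem~\ref{thm:1} consumes.

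For step (b) I would invoke the established results method by method. For (i) kernel ridge regression with a universal kernel, consistency as the penalty $\lambda=\lambda_M\to 0$ at an admissible rate follows from the RKHS regression literature (Steinwart and Christmann, Caponnetto and De Vito): the vanishing penalty kills the approximation error while a slow enough decay controls the estimation variance. For (ii) the method of sieves, the hypothesis that the limiting space contains $s(Z)$ removes the approximation bias, and consistency then follows from classical sieve M-estimation theory (Geman and Hwang, Shen and Wong, Chen). For (iii) AdaBoost, I would use the consistency of boosting under early stopping (Zhang and Yu, Bartlett and Traskin): the population minimizer of the exponential loss is a monotone transform of the log-odds, so the stopped iterate recovers the conditional probability through the logistic link once the number of rounds grows slowly relative to the sample size. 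For (iv) a sufficiently flexible network, universal approximation (Cybenko, Hornik) supplies vanishing bias as width/depth grows, and a sieve- or complexity-penalized argument controls the variance; ``sufficiently flexible'' is precisely the condition that the network class eventually contains an arbitrarily good approximation to $s(Z)$.

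The main obstacle is step (c). The guarantees above are typically stated as $L_2(P_Z)$-norm or pointwise-at-a-fixed-point consistency, whereas Theorem~\ref{thm:1} requires $\max_{m\le M}\bigl|\hat{s}_m(Z_0)-s_m(Z_0)\bigr|\convprob 0$, a maximum over a set of evaluation points that \emph{grows} with $M$ because the market index enters the target through the market-specific $(X_{jm},\xi_{jm})$. I would handle this by strengthening each convergence statement to a sup-norm statement over the relevant evaluation region (available for all four methods under mild smoothness of $s$, at the cost of slightly slower rates), and then dominating the growing maximum by the sup-norm via $\max_{m\le M}\bigl|\hat{s}_m(Z_0)-s_m(Z_0)\bigr|\le\|\hat{s}-s\|_{\infty}$, so that uniform-over-$m$ consistency is immediate once the sup-norm error is $o_p(1)$. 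The delicate point, which I would state as an explicit rate condition, is choosing each tuning sequence (penalty $\lambda_M$, sieve dimension, boosting rounds, network size) so that the sup-norm rate still vanishes after the union-bound-type inflation by $M$ without conflicting with the rate needed to annihilate the bias; for the methods whose sharp guarantees are genuinely $L_2$ rather than sup-norm, this is where the real work---and any additional smoothness or boundedness assumption on $s$ and on the density of $Z$ near $Z_0$---would have to be imposed.
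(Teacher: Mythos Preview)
Your proposal is correct and follows essentially the same route as the paper: the paper's proof is a four-line citation block, invoking \cite*{evgeniou2000regularization} with Theorem~29.8 of \cite*{devroye2013probabilistic} for (i), \cite*{chen2007large} for (ii), \cite*{bartlett2007adaboost} for (iii), and \cite*{cybenko1989approximation} for (iv). Your step~(b) is the same argument with a different but equally standard set of references (Steinwart--Christmann/Caponnetto--De~Vito, Shen--Wong/Chen, Zhang--Yu/Bartlett--Traskin, Cybenko/Hornik).

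The one substantive difference is your step~(c): you explicitly flag and attempt to handle the upgrade from the native $L_2$ or pointwise consistency of each cited result to the uniform-over-$m$ statement that Theorem~\ref{thm:1} actually consumes. The paper's proof does not address this at all---it simply asserts the lemma and cites the underlying consistency theorems without discussing how a maximum over a growing index set $m=1,\dots,M$ is controlled. Your sup-norm bound $\max_{m\le M}|\hat s_m(Z_0)-s_m(Z_0)|\le\|\hat s - s\|_\infty$ together with a rate condition on the tuning sequences is a reasonable way to close that gap, and in this respect your argument is more complete than the paper's. The tradeoff is that you correctly note this requires additional smoothness or boundedness assumptions that neither you nor the paper make explicit; the paper simply leaves them implicit in the phrase ``uniformly over $m$.''
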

\begin{proof}
  The result for (i) follows from \cite*{evgeniou2000regularization} and Theorem 29.8 of \cite*{devroye2013probabilistic}. \cite*{chen2007large} provides the technical conditions under which (ii) produces a consistent estimator. The results from \cite*{bartlett2007adaboost} specify the stopping rule that depends on the sample size and guarantees the convergence. \cite*{cybenko1989approximation} provides the results for (iv) for the case of a single hidden layer neural network with a sufficiently large number of neurons.
\end{proof}

\section{Distributional properties}
\subsection{Asymptotic distributions}
\begin{theorem}
	\label{thm:2}
	Let $\psi_m[Z_0] = H_{\xi,m}\xi_{s,m}(\hat{s}_m-s_m)[Z_0]$, $\psi[Z] = \bar{\psi}_m[Z]$, where $H_{\xi,m} = \dfrac{\partial}{\partial\xi}H_m$ and $\xi_{s,m}(\hat{s}_m-s_m) = \dfrac{\partial}{\partial s}\xi_m$, and square brackets stand for evaluation point. If:
	\begin{enumerate}[label=(\roman*)]
		\item $\dfrac{1}{\sqrt{M}}\sum\limits_{m=1}^M\E\psi_m\Convprob0$,
		\item $\dfrac{1}{M}\sum\limits_{m=1}^M\V\psi_m\Convprob0$, and
		\item $(\bar{H}_{\theta}+\overline{H_{\xi}\xi_{\theta}})$ is invertible,
	\end{enumerate}
	then $\sqrt{M}(\hat{\theta}-\theta)[Z] \Convprob -\sqrt{M}(\bar{H}_{\theta}+\overline{H_{\xi}\xi_{\theta}})^{-1}\bar{H}[Z]$.
\end{theorem}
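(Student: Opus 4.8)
The plan is to treat $\hat\theta[Z_0]$ as a two-step GMM / $M$-estimator and derive an asymptotic linear representation, with conditions (i)--(ii) serving to make the first-step (share) estimation error negligible after $\sqrt M$-scaling and condition (iii) ensuring the second-step Jacobian is invertible. I would write the estimating equation defining the estimator as $G_M(\hat\theta):=\bar H_m\big(\hat\theta,\xi_m(\hat\theta,\hat s_m)\big)=0$, where $\xi_m(\theta,s)$ is the share-inversion map and the overline denotes the average over $m$. At the truth the structural inversion returns $\xi_m(\theta,s_m)=\xi_m$ and the moment condition gives $\E[H_m(\theta,\xi_m)\mid Z_0]=0$, so the object $\bar H:=\frac1M\sum_m H_m(\theta,\xi_m)$ (true shares) is the ``oracle'' score that will appear in the limit, and the role of (i)--(ii) is precisely to show that replacing $\hat s_m$ by $s_m$ inside the moment is innocuous.

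First I would carry out the expansion in $\theta$. A mean-value expansion of $G_M$ around the true $\theta$ gives
\[
0=G_M(\hat\theta)=G_M(\theta)+J_M(\bar\theta)\,(\hat\theta-\theta),
\]
for an intermediate $\bar\theta$, with Jacobian
\[
J_M=\frac1M\sum_{m}\big(H_{\theta,m}+H_{\xi,m}\xi_{\theta,m}\big)\Convprob \bar H_\theta+\overline{H_\xi\xi_\theta}.
\]
Consistency from Theorem~\ref{thm:1} gives $\bar\theta\Convprob\theta$, so $J_M(\bar\theta)$ converges to the matrix in (iii), which is invertible by assumption; hence $\sqrt M(\hat\theta-\theta)=-J_M(\bar\theta)^{-1}\sqrt M\,G_M(\theta)$, and by Slutsky I can replace $J_M(\bar\theta)^{-1}$ by $(\bar H_\theta+\overline{H_\xi\xi_\theta})^{-1}$ provided $\sqrt M\,G_M(\theta)=O_p(1)$.

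Next I would expand $G_M(\theta)$ in the first-step shares. Expanding the inversion $\xi_m(\theta,\hat s_m)$ about the true shares $s_m$,
\[
H_m\big(\theta,\xi_m(\theta,\hat s_m)\big)=H_m(\theta,\xi_m)+\underbrace{H_{\xi,m}\xi_{s,m}(\hat s_m-s_m)}_{=\;\psi_m}+\tilde R_m,
\]
so that $\sqrt M\,G_M(\theta)=\sqrt M\,\bar H+\frac1{\sqrt M}\sum_m\psi_m+\frac1{\sqrt M}\sum_m\tilde R_m$. The crux is to show $\frac1{\sqrt M}\sum_m\psi_m\Convprob0$: its mean is $\frac1{\sqrt M}\sum_m\E\psi_m\to0$ by (i), and under independence (or weak dependence) across markets its variance is $\frac1M\sum_m\V\psi_m+o(1)\to0$ by (ii), so Chebyshev delivers convergence in probability. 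Assembling the two expansions,
\[
\sqrt M(\hat\theta-\theta)[Z]=-\big(\bar H_\theta+\overline{H_\xi\xi_\theta}\big)^{-1}\sqrt M\,\bar H[Z]+o_p(1)\Convprob-\sqrt M\big(\bar H_\theta+\overline{H_\xi\xi_\theta}\big)^{-1}\bar H[Z],
\]
which is the claim.

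I expect the main obstacle to be the negligibility of the remainder $\frac1{\sqrt M}\sum_m\tilde R_m$ together with the cross-market variance bound. The stated hypotheses control only the first two moments of each $\psi_m$, so to pass from $\sum_m\V\psi_m$ to a bound on $\V\!\big(\sum_m\psi_m\big)$ I need an explicit independence (or mixing) assumption across markets, and to discard $\tilde R_m$ I need smoothness of $H$ and of the inversion $\xi_m(\cdot,\cdot)$ together with a rate on $\hat s_m-s_m$ guaranteeing $\frac1{\sqrt M}\sum_m\|\hat s_m-s_m\|^2\Convprob0$; these regularity conditions are implicit in the statement and I would spell them out. A secondary subtlety is that the right-hand side is itself $\sqrt M$-scaled and random, so the conclusion is an asymptotic \emph{equivalence} --- the estimator matches the oracle score representation --- rather than a single limiting law, and the Slutsky replacement of $J_M^{-1}$ is legitimate precisely because $\sqrt M\,\bar H=O_p(1)$ under a Lindeberg-type condition on the oracle score.
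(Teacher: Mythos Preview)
Your proposal is correct and follows essentially the same route as the paper: a mean-value expansion of the estimating equation, followed by the mean-plus-variance argument (Chebyshev) from conditions (i)--(ii) to kill the first-stage contribution $\sqrt{M}\,\psi$. The only cosmetic difference is that the paper expands jointly in $(\theta,s)$ in a single mean-value step at intermediate points $(\tilde\theta,\tilde s)$, which sidesteps the explicit second-order remainder $\tilde R_m$ you introduce by expanding sequentially; your observations about the implicit independence across markets and the need for smoothness/rate conditions are accurate and apply equally to the paper's argument.
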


\noindent The proof is given in Appendix \ref{app:proof:thm:2}. There are a few important corollaries that follow from this result.
\begin{corollary}
	Fix any set $\mathcal{Z}=\{Z_1,\dots,Z_R\}$ such that for any $Z_r\in\mathcal{Z}$ conditions of the theorem above hold. Then $\sqrt{M}(\hat{\theta}-\theta)[\mathcal{Z}] \Convprob -\sqrt{M}(\bar{H}_{\theta}+\overline{H_{\xi}\xi_{\theta}})^{-1}\bar{H}[\mathcal{Z}]$.
\end{corollary}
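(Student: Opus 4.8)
The plan is to read the conclusion of Theorem~\ref{thm:2} as an asymptotic-linearity statement rather than a literal limit. For a single evaluation point the displayed $\Convprob$ means that the remainder
\[
\Delta_M[Z] := \sqrt{M}(\hat\theta-\theta)[Z] + \sqrt{M}(\bar H_\theta + \overline{H_\xi\xi_\theta})^{-1}\bar H[Z]
\]
tends to zero in probability as $M\to\infty$. The symbol $[\mathcal{Z}]$ then denotes stacking the evaluations at the $R$ points $Z_1,\dots,Z_R$ into one block vector, so the corollary is exactly the claim that the stacked remainder $\Delta_M[\mathcal{Z}] = (\Delta_M[Z_1]^\top,\dots,\Delta_M[Z_R]^\top)^\top$ vanishes in probability.

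First I would invoke Theorem~\ref{thm:2} separately at each $Z_r\in\mathcal{Z}$. By hypothesis its three conditions hold at every such point, so for each $r=1,\dots,R$ we obtain the pointwise statement $\Delta_M[Z_r]\Convprob 0$.

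Next I would assemble these $R$ pointwise statements into the joint one. The key observation is that the conclusion is convergence in probability to the \emph{constant} zero, and for a finite-dimensional vector such convergence is equivalent to componentwise convergence. Concretely, fixing $\varepsilon>0$ and noting that $\|\Delta_M[\mathcal{Z}]\|>\varepsilon$ forces at least one block to exceed $\varepsilon/\sqrt{R}$, the union bound gives
\[
\P\!\left(\|\Delta_M[\mathcal{Z}]\|>\varepsilon\right) \le \sum_{r=1}^R \P\!\left(\|\Delta_M[Z_r]\| > \varepsilon/\sqrt{R}\right),
\]
and each summand tends to zero by the previous step. Since $R$ is fixed and finite, the right-hand side is a finite sum of vanishing terms, so $\P(\|\Delta_M[\mathcal{Z}]\|>\varepsilon)\to 0$, which is the desired conclusion.

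The only point that requires care---rather than a genuine obstacle---is that this argument works precisely because the target is convergence in probability to a constant: the blocks $\Delta_M[Z_r]$ at distinct evaluation points are statistically dependent (they share the single predicted-share fit $\hat s$), but such dependence is irrelevant for a union bound. I would flag that this is exactly why the corollary does not, on its own, deliver a \emph{joint limiting distribution} across the points of $\mathcal{Z}$; obtaining that would instead require characterizing the joint law of the $\psi_m[Z_r]$ through a Cram\'er--Wold / multivariate central limit argument, which is not needed for the stated result.
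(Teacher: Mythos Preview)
Your proof is correct and follows the same approach as the paper, which simply notes that convergence in probability for the individual components implies convergence for the stacked vector. Your version is a more detailed unpacking of that one-line observation, making explicit the union bound and the role of the constant limit, but the underlying idea is identical.
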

\begin{proof}
	Immediately follows from the fact that convergence in probability for individual variables implies convergence for the vectors.
\end{proof}

\begin{corollary}
	Sufficient conditions for $\psi[Z]$ to satisfy (i) and (ii) are respectively:
	\begin{enumerate}[label=(\roman**)]
		\item $H_{\xi}\xi_{s}$ is bounded and there exists $h_1(M)=o_p(\frac{1}{\sqrt{M}})$ such that:
		\begin{eqnarray*}
			\plim\,\dfrac{\Big|\left\{m\,:\,|\E\hat{s}_m-s_m|>h_1(M)\right\}\Big|}{\sqrt{M}} = 0
		\end{eqnarray*}
		\item $H_{\xi}\xi_{s}$ is bounded, $\hat{s}_m$ are independent, and there exists $h_2(M)=o_p(1)$ such that:
		\begin{eqnarray*}
			\plim\,\dfrac{\Big|\left\{m\,:\,\V\hat{s}_m>h_2(M)\right\}\Big|}{M} = 0
		\end{eqnarray*}
	\end{enumerate}
\end{corollary}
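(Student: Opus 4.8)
The plan is to establish the two implications (i**)$\Rightarrow$(i) and (ii**)$\Rightarrow$(ii) separately, each by the same device: split the market index set at the stated threshold into a part where the relevant error is small and a part where it may be large but is rare, and bound the two pieces using, respectively, the rate of $h_1,h_2$ and the cardinality hypotheses. Throughout I treat $H_{\xi,m}\xi_{s,m}$ as a deterministic bounded multiplier (the derivatives are evaluated at the fixed true point), so that $|H_{\xi,m}\xi_{s,m}|\le C$ for some constant $C$, and I use that shares are probabilities, so the biases and variances of $\hat s_m$ admit crude uniform bounds.

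For (i), determinism of the multiplier gives $\E\psi_m = H_{\xi,m}\xi_{s,m}(\E\hat s_m - s_m)$, hence $|\E\psi_m|\le C|\E\hat s_m - s_m|$, and therefore $|M^{-1/2}\sum_m \E\psi_m| \le C M^{-1/2}\sum_m|\E\hat s_m - s_m|$. Writing $B=\{m: |\E\hat s_m - s_m|>h_1(M)\}$ and $A$ for its complement, the sum over $A$ is at most $C M^{-1/2}\cdot M\cdot h_1(M) = C\sqrt{M}\,h_1(M)$, which is $o_p(1)$ because $h_1(M)=o_p(M^{-1/2})$; the sum over $B$ is at most $C M^{-1/2}|B|$, since each bias is bounded by a constant, and this tends to zero in probability by the cardinality hypothesis $|B|/\sqrt M \Convprob 0$. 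Adding the two bounds yields (i).

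For (ii), the same determinism gives $\V\psi_m = (H_{\xi,m}\xi_{s,m})^2\,\V\hat s_m \le C^2\,\V\hat s_m$. Splitting at $h_2(M)$ into $B'=\{m:\V\hat s_m > h_2(M)\}$ and its complement $A'$, the average over $A'$ is at most $C^2 h_2(M)=o_p(1)$, while the average over $B'$ is at most $C^2|B'|/M$ times the uniform variance bound (a $[0,1]$-valued estimate has variance at most $1/4$), which vanishes by $|B'|/M \Convprob 0$. The independence of the $\hat s_m$ is what makes condition (ii) the \emph{right} object to control: in the proof of Theorem \ref{thm:2} the quantity that must be negligible is the variance of the normalized aggregate $M^{-1/2}\sum_m\psi_m$, and under independence the cross-covariances drop, so $\V(M^{-1/2}\sum_m\psi_m)=M^{-1}\sum_m\V\psi_m$, which is exactly the left-hand side of (ii).

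The only care needed is bookkeeping with the random thresholds: since $h_1(M)$ and $h_2(M)$ are $o_p$ rather than fixed and the ``bad'' sets are defined through them, I combine the $o_p$ rate on the good part with the in-probability vanishing of the cardinality ratio on the bad part, noting that products of $o_p(1)$ factors with the $O(1)$ uniform share bounds remain $o_p(1)$. The step I expect to be the main obstacle is precisely the handling of the bad sets: I must verify that a single crude uniform bound on biases and on variances (available only because shares are probabilities) suffices there, so that the cardinality controls $|B|/\sqrt M$ and $|B'|/M$ can absorb arbitrarily poor per-market estimation without requiring any further rate on those markets.
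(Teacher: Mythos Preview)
Your proposal is correct and follows essentially the same approach as the paper: both arguments reduce, via the boundedness of $H_\xi\xi_s$, to controlling $M^{-1/2}\sum_m|\E\hat s_m - s_m|$ and $M^{-1}\sum_m\V\hat s_m$, split the index set at the threshold $h_1(M)$ (resp.\ $h_2(M)$), bound the good part by $\sqrt{M}h_1(M)=o_p(1)$ (resp.\ $h_2(M)=o_p(1)$), and bound the bad part by the cardinality hypothesis combined with the crude uniform share bound. Your write-up is slightly more explicit than the paper's in spelling out why independence makes (ii) the right target and in flagging the $o_p$ nature of the thresholds, but the substance is the same.
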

\begin{proof}
	$(i^*)$ If $H_{\xi}\xi_{s}$ is bounded, then $\dfrac{1}{\sqrt{M}}\sum\limits_{m=1}^M|\E\hat{s}_m-s_m|\Convprob0$ implies condition (i). Note that both $\hat{s}_m$ and $s_m$ are also bounded by the vector of 1's, hence:
	\begin{align*}
		&\dfrac{1}{\sqrt{M}}\sum\limits_{m=1}^M|\E\hat{s}_m-s_m|\le\\
		&\quad\dfrac{1}{\sqrt{M}}Mh_1(M)+\dfrac{1}{\sqrt{M}}\Big|\left\{m\,:\,|\E\hat{s}_m-s_m|>h_1(M)\right\}\Big|\cdot(1,\dots,1)\\
		&\quad\Convprob\sqrt{M}h_1(M)+o_p(1) = o_p(1) 
	\end{align*}
	$(ii^*)$ If $H_{\xi}\xi_{s}$ is bounded, then $\dfrac{1}{M}\sum\limits_{m=1}^M\V\hat{s}_m\Convprob0$ implies condition (ii). Then:
	\begin{align*}
		&\dfrac{1}{M}\sum\limits_{m=1}^M\V\hat{s}_m\le\\
		&\quad\dfrac{1}{M}Mh_2(M)+\dfrac{1}{M}\Big|\left\{m\,:\,\V\hat{s}_m>h_2(M)\right\}\Big|\cdot(1,\dots,1)\\
		&\quad\Convprob h_2(M) + o_p(1) = o_p(1)
	\end{align*}
\end{proof}

\noindent It is worth discussing the last corollary in a bit more detail. What it stays formally can be interpreted as ``the first step estimation is not the biggest concern.'' Condition $(i^*)$ says that biases should not be too large for most of the markets. In particular, once a researcher undersmoothes at the first stage, the bias term is 0, and condition $(i^*)$ is mechanically satisfied. Another common application would be $N_m>>M$ for every $m$. In this case, the bias term, which is driven by $N_m$, has lower order compared to $M$. Note that up to $\sqrt{M}$ of share estimates can be arbitrarily biased or even inconsistent. Condition $(ii^*)$ simply says that the rate of convergence at the first step is also irrelevant as long as it is uniformly consistent. In particular, undersmoothing can suffer from low convergence rates due to higher variance. Yet, it does not affect the precision of the estimation in the limit. One common situation in the applied research could be a situation when $s_m$ is assumed to belong to a particular smoothness class over a compact support, while for some $c>0$ and $N(M)\to\infty$ we have $cN(M)\le N_m$ for all $m$. 

\subsection{Oracle property}
This subsection analyzes the relationship between the proposed estimator and the estimator that uses known shares instead. As we show, the asymptotic distribution of the two coincide, meaning the proposed estimator has the \textit{oracle property}.

\begin{theorem}
	Let $\hat{\theta}^o(Z)$ be the estimator obtained using the actual market shares. Then $\sqrt{M}(\hat{\theta}^o-\theta)[Z] \Convprob -\sqrt{M}(\bar{H}_{\theta}+\overline{H_{\xi}\xi_{\theta}})^{-1}\bar{H}[Z]$.
\end{theorem}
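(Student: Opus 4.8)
The plan is to recognize that the oracle estimator $\hat{\theta}^o(Z)$ is nothing other than the feasible estimator of Theorem \ref{thm:2} with the estimated shares $\hat{s}_m$ replaced by the true shares $s_m$. Once this is observed, the result follows by specializing Theorem \ref{thm:2} rather than by any new argument.

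First I would note that the entire effect of first-step estimation in Theorem \ref{thm:2} is carried by the influence term $\psi_m[Z_0] = H_{\xi,m}\xi_{s,m}(\hat{s}_m - s_m)[Z_0]$, which is linear in the share estimation error $\hat{s}_m - s_m$. For the oracle estimator this error vanishes identically, $\hat{s}_m - s_m = 0$ for every $m$, so $\psi_m \equiv 0$ and hence $\psi[Z] = \bar{\psi}_m[Z] = 0$. Consequently conditions (i) and (ii) of Theorem \ref{thm:2} hold trivially, since $\frac{1}{\sqrt{M}}\sum_{m=1}^M \E\psi_m = 0 \Convprob 0$ and $\frac{1}{M}\sum_{m=1}^M \V\psi_m = 0 \Convprob 0$.

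Next I would observe that the remaining hypothesis, condition (iii), is the invertibility of $\bar{H}_\theta + \overline{H_\xi \xi_\theta}$, which is a property of the structural moment function evaluated at the truth and does not depend on how the shares are obtained. It therefore holds for the oracle estimator under exactly the same assumption. Applying Theorem \ref{thm:2} with $\psi_m \equiv 0$ then yields $\sqrt{M}(\hat{\theta}^o - \theta)[Z] \Convprob -\sqrt{M}(\bar{H}_\theta + \overline{H_\xi \xi_\theta})^{-1}\bar{H}[Z]$, which is the claim. Equivalently, one can redo the first-order expansion of the sample moment condition $\bar{H}(\hat{\theta}^o, \hat{\xi}^o) = 0$ about $(\theta, \xi)$: the generic term $\sqrt{M}\bar{\psi}$ that captures share error is now exactly zero, so solving the linearized system immediately gives the same leading term.

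The main point to be careful about --- rather than a genuine obstacle --- is simply to confirm that the conditions of Theorem \ref{thm:2} that are unrelated to the share estimator (the invertibility in (iii) and the regularity underlying the linearization, including negligibility of the Taylor remainder in the $\theta$-direction) transfer to the oracle case. Since these concern only the structural model and the moment function, and not the construction of $\hat{s}_m$, they carry over verbatim; the oracle result is thus a degenerate special case of Theorem \ref{thm:2} in which the first-step contribution is identically null. This is exactly what establishes the oracle property: the feasible and oracle estimators share the same asymptotic linear representation.
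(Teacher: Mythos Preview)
Your proposal is correct. The paper's proof redoes the mean-value expansion of the sample moment condition $\frac{1}{M}\sum_m H(\hat{\theta}^o,\hat{\xi}(s_m,\hat{\theta}^o,X_m),X_m,W_m)=0$ about $\theta$ directly, obtaining exactly the leading term you state; you instead invoke Theorem~\ref{thm:2} with $\hat{s}_m=s_m$ so that $\psi_m\equiv 0$, which amounts to the same expansion with the $s$-direction term dropped. Since you also note the ``equivalently'' route via the direct first-order expansion, your argument and the paper's coincide in substance, yours being the slightly more economical phrasing.
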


\begin{proof}
	By definition of $\hat{\theta}^o[Z]$:
	\begin{eqnarray*}
		\dfrac{1}{M}\sum\limits_{m=1}^MH(\hat{\theta}^o,\hat{\xi}(s_m,\hat{\theta}^o,X_m)),X_m,W_m)[Z] = 0
	\end{eqnarray*}
	Using the mean value theorem, for some $\tilde{\theta}\Convprob\theta$:
	\begin{align*}
		&\dfrac{1}{M}\sum\limits_{m=1}^MH(\theta,\hat{\xi}(s_m,\theta,X_m),X_m,W_m)[Z] + \\
		&\quad\dfrac{1}{M}\sum\limits_{m=1}^MH_{\theta}(\tilde{\theta},\hat{\xi}(s_m,\tilde{\theta},X_m),X_m,W_m)(\hat{\theta}-\theta)[Z] +\\
		&\quad\dfrac{1}{M}\sum\limits_{m=1}^{M}H_{\xi}(\tilde{\theta},\hat{\xi}(s_m,\tilde{\theta},X_m),X_m,W_m)\hat{\xi}_{\theta}(s_m,\tilde{\theta},X_m)(\hat{\theta}-\theta)[Z] = 0.
	\end{align*}
	Solving for $\sqrt{M}(\hat{\theta}-\theta)[Z]$, we get:
	\begin{eqnarray*}
		\sqrt{M}(\hat{\theta}-\theta)[Z] \Convprob -\sqrt{M}(\bar{H}_{\theta}+\overline{H_{\xi}\xi_{\theta}})^{-1}\bar{H}[Z].
	\end{eqnarray*}
\end{proof}

\subsection{Asymptotic normality}
This subsection shows that under some extra assumptions on the DGP, we can get asymptotic distribution of $(\hat{\theta}-\theta)[\mathcal{Z}]$ in a closed form. Moreover, this distribution turns out to be a Gaussian process over $\mathcal{Z}$. 
\begin{theorem}
	Assume that DGP is such that $\E[H(Z)^2] = \sigma^2(Z)$ is finite for every $Z$ and conditions of Theorem \ref{thm:1} are satisfied. Then, the asymptotic distribution for $\sqrt{M}(\hat{\theta}-\theta)$ is a Gaussian process with mean 0 and the covariance function given by:
	\begin{eqnarray*}
		V(Z_1,Z_2) = R(Z_1)^{-1}\E[H(Z_1)H'(Z_2)](R(Z_2)^{-1})^T,
	\end{eqnarray*}
	where $R(Z) = (\E[H_{\theta}(Z)]+E[H_{\xi}\xi_{\theta}(Z)])$
\end{theorem}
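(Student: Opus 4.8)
The plan is to build directly on the linear representation established in Theorem~\ref{thm:2}, which gives, for each evaluation point,
\[
\sqrt{M}(\hat\theta-\theta)[Z] = -\left(\bar{H}_\theta+\overline{H_\xi\xi_\theta}\right)^{-1}[Z]\cdot\sqrt{M}\,\bar H[Z]+o_p(1),
\]
where the rescaled score is $\sqrt{M}\,\bar H[Z]=\frac{1}{\sqrt M}\sum_{m=1}^M H_m(Z)$. The limiting law is thus governed jointly by a random ``slope'' matrix and this score, and I would split the argument into a deterministic-limit part for the slope and a distributional-limit part for the score, recombining them through Slutsky's lemma and the continuous-mapping theorem. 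Because the representation is \emph{linear} in the score, Gaussianity of the limit will transfer automatically once the score is shown to converge to a Gaussian process.

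First I would treat the slope. Applying a law of large numbers across markets to the averages $\bar H_\theta[Z]$ and $\overline{H_\xi\xi_\theta}[Z]$ gives $(\bar H_\theta+\overline{H_\xi\xi_\theta})[Z]\Convprob R(Z)=\E[H_\theta(Z)]+\E[H_\xi\xi_\theta(Z)]$ for each $Z$. Since matrix inversion is continuous on the set of invertible matrices and $R(Z)$ is invertible by condition (iii) of Theorem~\ref{thm:2}, the continuous-mapping theorem yields $(\bar H_\theta+\overline{H_\xi\xi_\theta})^{-1}[Z]\Convprob R(Z)^{-1}$. This factor is therefore asymptotically non-random and enters the final covariance only through the deterministic matrices $R(Z_1)^{-1}$ and $R(Z_2)^{-1}$.

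Next I would establish the limit of the score process $Z\mapsto\frac{1}{\sqrt M}\sum_m H_m(Z)$. The moment condition that identifies $\theta$ gives $\E[H(Z)]=0$, so each summand is mean zero; together with the maintained finite-second-moment assumption $\E[H(Z)^2]=\sigma^2(Z)<\infty$ and independence (or weak dependence) of the $H_m$ across markets, a multivariate Lindeberg--Feller central limit theorem applies. Hence for any finite grid $\{Z_1,\dots,Z_R\}$ the stacked vector $\big(\frac{1}{\sqrt M}\sum_m H_m(Z_r)\big)_{r=1}^R$ converges to a mean-zero Gaussian vector with block covariance $\E[H(Z_i)H'(Z_j)]$. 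These are precisely the finite-dimensional distributions of a mean-zero Gaussian process with kernel $\E[H(Z_1)H'(Z_2)]$.

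The main obstacle is upgrading this finite-dimensional convergence to weak convergence of the entire process, i.e.\ establishing tightness (stochastic equicontinuity) of $Z\mapsto\frac{1}{\sqrt M}\sum_m H_m(Z)$; finite-dimensional convergence alone does not deliver a Gaussian-process limit. This step requires a regularity condition on the map $Z\mapsto H_m(Z)$ that the stated hypotheses do not make explicit---for instance a Lipschitz or H\"older bound with an integrable envelope, or a bracketing/uniform-entropy bound on the class $\{H_m(\cdot)\}$---so that a functional CLT applies. Granting such a condition on the DGP, the score converges weakly to a Gaussian process $\GP$ with kernel $\E[H(Z_1)H'(Z_2)]$, and combining with the slope limit via Slutsky and the continuous-mapping theorem gives $\sqrt M(\hat\theta-\theta)[\cdot]\Convdist -R(\cdot)^{-1}\GP(\cdot)$. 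This limit is a mean-zero Gaussian process whose covariance at $(Z_1,Z_2)$ is $R(Z_1)^{-1}\E[H(Z_1)H'(Z_2)](R(Z_2)^{-1})^T=V(Z_1,Z_2)$, as claimed.
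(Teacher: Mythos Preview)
Your proposal is correct and follows essentially the same route as the paper: start from the linearization of Theorem~\ref{thm:2}, replace the slope $(\bar H_\theta+\overline{H_\xi\xi_\theta})^{-1}$ by its deterministic limit $R(Z)^{-1}$, apply a multivariate CLT to the stacked scores $\sqrt{M}\,\bar H[(Z_1,\dots,Z_R)]$, and read off the mean and covariance. The only difference is that you are more careful than the paper---the paper identifies the ``Gaussian process'' limit purely through its finite-dimensional distributions and does not discuss tightness, whereas you correctly flag that upgrading to full weak convergence would require an additional equicontinuity condition not stated in the hypotheses.
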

\begin{proof}
	Recall that $\sqrt{M}(\hat{\theta}-\theta)[\mathcal{Z}] \Convprob -\sqrt{M}(\bar{H}_{\theta}+\overline{H_{\xi}\xi_{\theta}})^{-1}\bar{H}[\mathcal{Z}]$. The asymptotic distribution of $\sqrt{M}(\hat{\theta}-\theta)[\mathcal{Z}]$ is thus the same as that of $-\sqrt{M}(\bar{H}_{\theta}+\overline{H_{\xi}\xi_{\theta}})^{-1}\bar{H}[\mathcal{Z}]$. The only asymptotically random term in this expression is $\sqrt{M}\bar{H}[\mathcal{Z}]$. Fix any $(Z_1,\dots,Z_r)$. The CLT for vectors implies that vector $\sqrt{M}\bar{H}[(Z_1,\dots,Z_r)]$ will have asymptotically normal distribution. Hence, the asymptotic distribution of the process $\sqrt{M}\bar{H}[\mathcal{Z}]$ is indeed Gaussian. We thus only need to characterize its mean and covariance function. We start with the mean:
	\begin{eqnarray*}
		a\E\sqrt{M}\bar{H}[Z] = \E H(Z) = 0,\quad \forall Z\in\mathcal{Z}.
	\end{eqnarray*}
	For the covariance:
	\begin{eqnarray*}
		a\cov\left(\sqrt{M}\bar{H}(Z_1),\sqrt{M}\bar{H}(Z_2)\right) \Convprob \E[H(Z_1)H^T(Z_2)].
	\end{eqnarray*}
\end{proof}

\section{Generalizations and extensions}
\subsection{Aggregate moments}
With an aggregate moment condition the full estimation problem can be written as:
\begin{align*}
  &\E_{(W,X)}\left[H\left(\theta(Z);s(Z),X,W\right)\right] = 0, \quad \forall Z\\
  &\E_{(W,X,Z)}\left[G\left(\gamma(Z);s(Z),X,W\right)\right] = 0,
\end{align*}
where $H(\cdot)$ are the individual moments, and $G(\cdot)$ is the aggregate moment. We assume that $\gamma\subset\theta$, that is $\theta = (\gamma, \gamma')$ with $\gamma(Z)$ being of much smaller dimension than $\theta(Z)$. In the demand estimation setup, elasticities only depend on the parameters through $\alpha$ and $s$, i.e., $\alpha(Z)$ is one-dimensional no matter how high the dimensionality of $\theta(Z)$ is. To handle the problem, we first fix some finite base set $\mathcal{Z}=\{Z_1,\dots,Z_K\}$ and some interpolation method providing us $f(Z)$ for any $Z$ once we have values $f_1,\dots,f_K$ for $\mathcal{Z}$. We can then approximate the problem above as:
\begin{align*}
  &\E_{(W,X)}[H(\theta(Z);s(Z),X,W)] = 0, \quad \forall Z\in\mathcal{Z}\\
  &\E_{(W,X,Z)}[G(\gamma(Z);s(Z),X,W)] = 0
\end{align*}
Next, we assume that the loss function has the form:
\begin{equation*}
  \mathcal{L}(\theta) = \dfrac{1}{K}\sum\limits_{k=1}^{K}\bar{H}_k^TR_k\bar{H}_k + \underbrace{h}_{\text{relative weight}}\bar{G}^TR_0\bar{G},
\end{equation*}
where $h>0$ is the constant allowing the researcher to scale up and down the importance of the aggregate moment. The key observation that allows us to greatly simplify the optimization is the fact that the gradient is hard to compute for $\gamma(Z)$ responsible for only a small fraction of the parameters. Differentiating w.r.t.\ $\gamma'$ we obtain:
\begin{equation*}
  \dfrac{\partial\mathcal{L}}{\partial \gamma'_k} = \dfrac{2}{K}\dfrac{\partial \bar{H}_k^T}{\partial \gamma_k'}R_k\bar{H}_k
\end{equation*}
which is the same exact form as the one for the problem without the aggregate moment. Note that it depends on a single $k$ and only a small number of parameters, $\theta_k$, as $\bar{H}_k$ depends on $\theta_k$ only. The last piece is to differentiate the loss function w.r.t.\ $\gamma$. This can be done numerically, as the number of parameters in $\gamma$ is small. After obtaining $\partial\mathcal{L}/\partial \theta$, one can use a gradient descent method (with any machine learning stabilization technique, such as ADAM, RMSprop, etc.) to find optimal values of parameters.

\subsection{Shape assumptions}
The procedure allows the researcher to add extra assumptions on the shape of the functions $\theta(Z)$ and $\xi(Z)$. Those assumptions can be helpful if the dimensionality of $Z$ is large. Here we introduce the sparse design and study its properties. The design assumes that $\theta$ and \textit{all} of $\xi_m$'s only depend on at most $p_0$ of (the same) characteristics. The sparse design is suitable for demand estimation in many IO applications and may be a modeling choice in other disciplines.
\begin{definition}
	The model is $p_0$-sparse if there exists a partition of $Z=(U,U')$ such that:
	\begin{enumerate}[label=(\roman*)]
		\item $U$ is $p_0$-dimensional,
		\item $\forall\,u, u', u''$ such that $(u,u'),(u,u'')\in\mathcal{Z}$ we have $\theta(u,u')=\theta(u,u'')$ and $\xi_m(u,u')=\xi_m(u,u'')$ for every $m$.
	\end{enumerate} 
\end{definition}
\noindent The sparsity condition guarantees that even if the dimensionality of $Z$ is huge, the precision of the model can still be high. Indeed, the researcher only needs to identify a few variables among $Z$ that actually affect the preferences. Before stating the main result for the sparse design, we need to state a lemma that links the shape conditions on $\theta$ and $\xi$ to the shape conditions on $s$.
\begin{lemma}
	If the design is $p_0$-sparse, then $s_m$ is also $p_0$-sparse with respect to the same partition $(U,U')$.
\end{lemma}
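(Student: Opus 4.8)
The plan is to exploit the fact that, in each market $m$, the share vector depends on the covariate $Z$ \emph{only} through the pair $(\theta(Z),\xi_m(Z))$, and then to compose this dependence with the sparsity of the model. This reduces the lemma to a one-line substitution once the dependence structure is made explicit.

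First I would make the functional dependence precise. From the definition of the choice probabilities,
\begin{equation*}
s_{jm}(Z) = \int \mathbb{I}\big[f(\theta(Z);X_{jm},\xi_{jm}(Z),\zeta_{jm}) = \max_{j'} f(\theta(Z);X_{j'm},\xi_{j'm}(Z),\zeta_{j'm})\big]\,dF_\zeta(\zeta),
\end{equation*}
the right-hand side is a function of $Z$ only through the arguments $\theta(Z)$ and $\{\xi_{j'm}(Z)\}_{j'}$: the product characteristics $X_{jm}$ are fixed features of market $m$, and the distribution $F_\zeta$ of the idiosyncratic shocks does not vary with $Z$. Writing this as $s_m(Z)=\Phi_m\big(\theta(Z),\xi_m(Z)\big)$ for a fixed map $\Phi_m$ that absorbs $X_m$ and $F_\zeta$, the whole argument becomes a composition.

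Next I would invoke the definition of $p_0$-sparsity directly. Take any $u,u',u''$ with $(u,u'),(u,u'')\in\mathcal{Z}$. By sparsity of the model, $\theta(u,u')=\theta(u,u'')$ and $\xi_m(u,u')=\xi_m(u,u'')$ for every $m$. Composing with $\Phi_m$ then yields
\begin{equation*}
s_m(u,u')=\Phi_m\big(\theta(u,u'),\xi_m(u,u')\big)=\Phi_m\big(\theta(u,u''),\xi_m(u,u'')\big)=s_m(u,u''),
\end{equation*}
which is exactly the statement that $s_m$ is $p_0$-sparse with respect to the same partition $(U,U')$.

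The only genuine subtlety — and the step I would be most careful about — is the claim that the share formula depends on $Z$ solely through $(\theta(Z),\xi_m(Z))$. In the pure-logit specification this is immediate. In the random-coefficients specification $\zeta=(\varepsilon,\nu)$ the distribution of the taste shifters $\nu$ is itself parameterized (by $\Sigma$), so one must ensure that $\Sigma$ is carried inside $\theta$ — i.e., that ``$\theta$'' in the definition of sparsity collects \emph{all} parameters entering the choice probabilities, including those of $F_\zeta$. Once $\theta$ is read this way, $\Phi_m$ is a bona fide function of $(\theta,\xi_m)$ alone and the composition above goes through verbatim; notably, no regularity or continuity assumptions on $\Phi_m$ are required, since the argument is purely pointwise.
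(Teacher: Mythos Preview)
Your proof is correct and follows essentially the same approach as the paper: write $s_m(Z)$ as a fixed function of $(\theta(Z),\xi_m(Z))$ (the paper writes $s_m(Z)=s_m(\theta(Z),\xi_m(Z),X_m)$, your $\Phi_m$), then substitute using the sparsity of $\theta$ and $\xi_m$. Your extra care about where $\Sigma$ lives in the random-coefficients case is more explicit than the paper, but the core argument is identical.
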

\begin{proof}
	Recall that $s_m(Z) = s_m(\theta(Z),\xi_m(Z),X_m)$. Now $\forall\, u, u', u''$ such that $(u,u'),(u,u'')\in\mathcal{Z}$:
	\begin{eqnarray*}
		s_m(u,u') = s_m(\theta(u,u'),\xi_m(u,u'),X_m) = s_m(\theta(u,u''),\xi_m(u,u''),X_m) = s_m(u,u'').
	\end{eqnarray*}
\end{proof}
\begin{theorem}
	\label{thm:3}
	Without loss of generality let $\E[Z]=0$. Assume that $(p,\{N_m\}_1^M,U,U',c_1)$ and the corresponding $p_0$-sparse designs satisfy the following conditions:
	\begin{enumerate}[label=(\roman*)]
		\item Define $U_m = \{u\in U:|\E[d_{im}Z_{im}^u]|\ge c_1\}$. Then $U=\bigcup\limits_{m}U_m$,
		\item $\log(p) = o(\min(N_m))$,
		\item $M = o(\min(N_m))$,
		\item $\dfrac{c_1^2N_m}{8}-\log(p_0)-\log(M)\to\infty$.
	\end{enumerate}
	Then, the following estimator converges in probability to the estimator $\hat{\theta}(U)$, i.e., if $U$ was known:
	\begin{enumerate}[label=(\roman*)]
		\item Recover $\hat{U}$ by the following algorithm:
		\begin{enumerate}
			\item For every market $m$ and every variable $Z^u$ add $u$ to $\hat{U}_m$ iff: 
			\begin{eqnarray*}
				\dfrac{1}{\sqrt{\hat{V}_{mu}N_m}}\Big|\sum\limits_{i=1}^{N_m}s_{im}Z_{im}^u\Big|\ge \sqrt{2(\log p + \log N_m)},
			\end{eqnarray*}
			\item Set $\hat{U}=\bigcup\limits_{m}\hat{U}_m$.
		\end{enumerate}
		\item Estimate $\hat{\theta}(Z)$ as $\hat{\theta}(\hat{U})$ using NAME.
	\end{enumerate}
\end{theorem}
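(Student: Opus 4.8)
The plan is to reduce the statement to a variable-selection (support-recovery) result: I will show that the selected set $\hat U$ coincides with the true sparse support $U$ with probability tending to one, and then observe that on this event the proposed estimator $\hat\theta(\hat U)$ is \emph{identical} to the oracle estimator $\hat\theta(U)$, since NAME is then fed exactly the same data and the same set of active coordinates. Concretely, for any $\epsilon>0$,
$$\P\left(\bigl|\hat\theta(\hat U)-\hat\theta(U)\bigr|>\epsilon\right)\le \P(\hat U\neq U),$$
so it suffices to prove $\P(\hat U = U)\to 1$. This decomposes into bounding two error events: over-selection (a noise coordinate $u\in U'$ enters $\hat U$) and under-selection (a signal coordinate $u\in U$ is missed). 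Both are handled by concentration plus a union bound calibrated to the stated rates.

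For the over-selection (false-positive) bound, fix $u\in U'$. By $p_0$-sparsity the choice depends on $Z_{im}$ only through its $U$-coordinates (this is the content of the preceding lemma applied to $s_m$), so under the maintained design assumption that the centered noise coordinates are uncorrelated with the choice, the statistic $\sum_i s_{im}Z_{im}^u$ has mean zero. Assuming $Z$ is coordinatewise bounded (or sub-Gaussian) and that $\hat V_{mu}$ concentrates uniformly around a variance $V_{mu}$ bounded away from $0$ and $\infty$, the studentized statistic obeys a Gaussian tail in the moderate-deviation regime granted by condition (ii), $\log p = o(\min_m N_m)$:
$$\P\left(\tfrac{1}{\sqrt{\hat V_{mu}N_m}}\bigl|\textstyle\sum_i s_{im}Z_{im}^u\bigr|\ge \sqrt{2(\log p+\log N_m)}\right)\lesssim e^{-(\log p+\log N_m)}=\tfrac{1}{pN_m}.$$
A union bound over the at most $p$ noise coordinates and the $M$ markets then gives total false-positive probability $\lesssim \sum_m \tfrac{1}{N_m}\le \tfrac{M}{\min_m N_m}$, which vanishes by condition (iii), $M=o(\min_m N_m)$.

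For the under-selection (false-negative) bound, fix $u\in U$. By condition (i) there is a market $m$ with $u\in U_m$, i.e.\ $|\E[d_{im}Z_{im}^u]|\ge c_1$. In that market, since the summands $s_{im}Z_{im}^u$ are bounded (with $s_{im}\in[0,1]$ and $Z_{im}^u$ bounded, giving range $2$), Hoeffding's inequality gives
$$\P\left(\Bigl|\tfrac1{N_m}\textstyle\sum_i s_{im}Z_{im}^u\Bigr|<\tfrac{c_1}{2}\right)\le 2\,e^{-c_1^2 N_m/8},$$
which is precisely where the constant $8$ in condition (iv) originates. On the complementary event the studentized statistic is of order $\sqrt{N_m}\,c_1$, which dominates the threshold $\sqrt{2(\log p+\log N_m)}$ by conditions (ii)–(iii), so $u$ is detected. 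Union-bounding over the $p_0$ signal coordinates and the $M$ markets yields false-negative probability $\le 2\,e^{\log p_0+\log M-c_1^2\min_m N_m/8}\to 0$ by condition (iv). Combining the two bounds gives $\P(\hat U\neq U)\to 0$, and the claim follows.

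I expect the main obstacle to be the rigorous uniform control of the studentized (self-normalized) statistics in the false-positive step: the threshold is calibrated to a Gaussian tail, yet the statistic uses the random estimate $\hat V_{mu}$ rather than the true variance. Making this precise requires either a self-normalized moderate-deviation inequality or a careful two-stage argument—first a uniform concentration of $\hat V_{mu}$ around a variance bounded away from zero, then a Bernstein/Hoeffding tail with a small slack absorbed into the $e^{-(\log p+\log N_m)}$ rate—all while remaining in the regime $\log p=o(\min_m N_m)$ where the Gaussian approximation is valid and keeping the bound uniform over the $\sim pM$ indices. The false-negative side, by contrast, is a routine Hoeffding-plus-union-bound computation once condition (iv) is in hand.
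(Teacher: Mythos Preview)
Your proposal is correct and follows essentially the same route as the paper: decompose $\{\hat U\neq U\}$ into false-positive and false-negative events, control the former via a sub-Gaussian tail calibrated so that a union bound over $p$ coordinates and $M$ markets yields $O(M/\min_m N_m)\to 0$ by (iii), and control the latter via a concentration bound giving the $e^{-c_1^2N_m/8}$ rate so that (iv) kills the $p_0M$ union bound. Your flagged obstacle---the replacement of $V_{mu}$ by $\hat V_{mu}$ in the studentized statistic---is handled informally in the paper as well (the argument is written with the true variance), so your instinct that this is the one step needing extra care is accurate.
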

\noindent The proof is provided in Appendix \ref{app:proof:thm:3}.
\begin{remark}
	The procedure described above is an application of a fairly general idea of sparse model estimation to this setup and NAME. This idea boils down to a two-step estimation. At the first step, we recover the set of confounding variables. At the second step, the main estimation is done using only the recovered subset of the variables. Applications to different settings can be found in \cite{chernozhukov2017double} and \cite{gillen2015blp}.
\end{remark}

\section{Simulations}
\subsection{Functional form mis-specification}\label{sec:sims}
To illustrate the performance differences of different methods we consider a setting with $J=2$ goods (plus the outside good) and $M=50$ separate markets. Each market is populated by $N=1000$ consumers. Both $X_{jm}$ and $Z_{im}$ are one-dimensional ($p=k=1$) and the price is exogenous.\footnote{This is done for simplicity to ignore the instruments, $W_{jm}$, but does not affect the results.}

We let $\beta(Z) = \gamma_0 + \gamma_1Z + \gamma_2Z^2$ and consider three different cases: (i) the standard approach when $\beta(Z)$ is misspecified and $\tilde{\beta}(Z) = \tilde{\gamma}_0 + \tilde{\gamma}_2Z^2$ is used instead, (ii) the standard approach when $\beta(Z)$ is correctly specified, and (iii) the proposed algorithm. The true value of the price coefficient, $\alpha$, is $1$.


We use the minimum distance estimator with the following moments:
\begin{enumerate}
  \item The covariance between $X_{jm}$ and $\hat{\xi}_{jm}$ has to be close to zero.
  \item The covariance between $P_{jm}$ and $\hat{\xi}_{jm}$ has to be close to zero.
  \item The covariance between $Z_{im}$ and $X_{d_{im}m}$ (where $X_{d_{im}m}$ denotes the characteristics of the good chosen by individual $i$ in market $m$) has to be close to the covariance between $Z_{im}$ and $\sum_j \hat{s}_{jm}(Z_{im})X_{jm}$ (only for the oracle case).
  \item The covariance between $Z_{im}^2$ and $X_{d_{im}m}$ has to be close to the covariance between $Z_{im}^2$ and $\sum_j \hat{s}_{jm}(Z_{im})X_{jm}$ (only for the misspecified and the oracle cases).
\end{enumerate}

\noindent For the first step of the algorithm we use the kernel ridge regression as presented in \cite*{Murphy2012MachineL}.

We estimate the model $B=50$ times and obtain an estimate $\hat{\alpha}_b$ at each iteration $b=1,\dots,B$. The distributions of these values are shown in Figure \ref{fig:sims_alphas}. In the plot, ``Misspecified'' refers to case (i), ``Oracle'' to case (ii), and ``Proposed'' to case (iii).

\subsubsection{Precision}
As reported in Table \ref{tbl:rmse}, the proposed procedure performs well compared to the oracle case in terms of both---the bias and the root-mean-square error---while the misspecified procedure produces the estimates that are much more biased.
\begin{table}[!h]
\caption{Bias and Root-Mean-Square Error}
\label{tbl:rmse}
\begin{center}
  \begin{tabular}{l|c|c|c}
    \hline\hline
     \multicolumn{4}{c}{Bias and Root-Mean-Square Error of the Algorithms} \\ \hline
     & Misspecified & Proposed & Oracle \\ \hline
     Bias & $0.42$ & $0.01$ & $0.07$ \\
     RMSE & $0.42$ & $0.07$ & $0.21$ \\ \hline\hline
  \end{tabular}
\end{center}
\end{table}

\begin{figure}[!h]
\begin{center}
 \includegraphics[width=\textwidth]{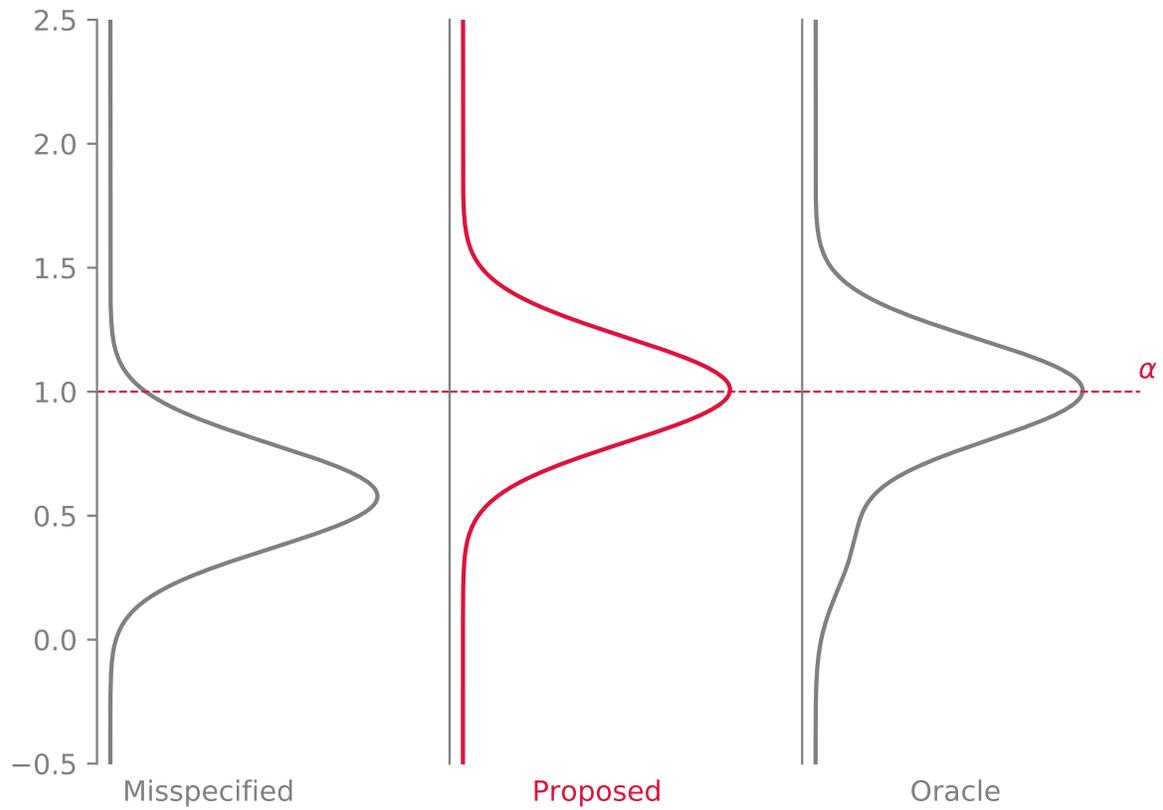}
\end{center}
\caption[]{Simulated Distributions of the Price Coefficient as Described in Section \ref{sec:sims}}
\label{fig:sims_alphas}
\end{figure}


\subsubsection{Computational Costs}
One of the main advantages of our procedure is its computational simplicity. Even for a single dimensional individual level characteristic, $Z$, it converges on average almost two times faster than the misspecified procedure (which requires just a single additional parameter) and almost four times faster than the correct specification (two additional parameters). When the dimensionality of $Z$ or the complexity of $\beta(Z)$ as a function of $Z$ increase, the difference becomes even more substantial.

\begin{table}[!h]
\caption{Computational Costs}
\label{tbl:costs}
\begin{center}
  \begin{tabular}{l|c|c|c}
    \hline\hline
     \multicolumn{4}{c}{Computational Times of the Algorithms (in seconds)} \\ \hline
     & Misspecified & Proposed & Oracle \\ \hline
     Average time & $191$\,s & $107$\,s & $406$\,s \\ 
     Standard deviation & $164$\,s & $105$\,s & $317$\,s \\ \hline\hline
  \end{tabular}
\end{center}
\end{table}

\subsection{Sparse design}
In our basic setup, we simulate the simplest model to demonstrate how subset recovery works. There are $M=50$ homogenous markets each populated with $N_m=1000$ individuals. The individuals choose whether to purchase a single available good, or stick to the outside option, so $J=1$. The good has only one characteristic, meaning that $K=1$. The space of observed characteristics is 1000-dimensional, but in every market only one of the two first components affects the preferences with equal probability. This means that $p=1000$, $p_0=2$, and $U_m = \{Z_1\}$ or $U_m=\{Z_2\}$ equally likely. Denote by $Z_{r_m}$ the variable influencing choices in the $m$th market. The utility of the consumer is then $u_{i1m} = Z_{r_m}X_m + \epsilon_{i1m}$ while $u_{i0m}=\epsilon_{i0m}$. The choice is made according to the discrete choice model. All $X$'s and $Z$'s are $i.i.d.$ standard normal, while $\epsilon$'s are $i.i.d.$ EV-I random variables. 

We first run 1000 simulations for this particular set of parameters. The results are summarized in the first row of Table \ref{tab:sparse:summary:1}. Note that the vast majority of simulations resulted in a correct recovery of $U$ with only 1\% being too conservative. No simulations resulted in under-recovery of $U$. The second row of Table \ref{tab:sparse:summary:1} runs the same simulations but for a much larger $p_0=20$ with 5 variables chosen at random kicking in for each individual market. The correct recovery still occurs in the vast majority of the cases accounting for 96.7\% of all simulations with 0.6\% of simulations resulting in an overrecovery of size one. The main difference with $p_0=2$ case occurs due to 2.7\% of underrecoveries. Those underrecoveries, however, are seldom of size 2 and are never larger. All simulations resulted in one of the three cases: exact recovery, underrecovery, or overrecovery. 

\begin{table}
	\caption{Simulation Results for Sparse Design}
	\label{tab:sparse:summary:1}
	\begin{center}
	\begin{tabular}{c|c|ccc|ccc|c|c}
		\hline\hline
		& \multicolumn{1}{|c|}{$\hat{U}=U$} &\multicolumn{3}{|c|}{$U\subsetneq \hat{U}$} & \multicolumn{3}{|c|}{$\hat{U}\subsetneq U$} & Other & Total \\
		$|U\bigtriangleup\hat{U}|$&  & 1 & 2 & 3+ & 1 & 2 & 3+ & \\
		\hline
		$p_0=2$ & 990 & 10 & 0 & 0 & 0 & 0 & 0 & 0 & 1,000\\
		$p_0=20$ & 967 & 6 & 0 & 0 & 26 & 1 & 0 & 0 & 1,000 \\
		\hline\hline
	\end{tabular}
	\end{center}
	\begin{justify}
		\emph{Note:} $\hat U$ is the estimated set of influential characteristics, and $U$ is the true set. The total number of characteristics is $p=1,000$. $p_0$ is the number of characteristics that have non-zero effect on the preferences. $\Delta$ denotes for a symmetric difference, i.e., the set of characteristics that belong to exactly one set, and $|\cdot|$ denotes the cardinality of a set.
	\end{justify}
\end{table}	

\section{Conclusion}
The method of estimation that we propose is likely to outperform the approach commonly used in the literature unless the correct functional form is known to the researcher. We also show that the proposed estimator has nice large sample properties and is more computationally efficient than the commonly used alternatives.

There are several potential directions for future research. First, the optimal choice of $Z_0$ should be investigated. As different markets may be heterogeneous in terms of the distributions of $Z$, the precision of the estimates may suffer from the choice of $Z_0$ that is the same for every market. Second, most of the machine learning procedures require the choice of a tuning parameter (or parameters) that was largely ignored in this paper. Third, the properties of the relevant statistical tests may be investigated. Finally, it would be useful to see how the proposed method performs in applications in comparison with the alternative procedures.

\bibliographystyle{chicago}
\bibliography{dd}

\appendix
\section{Proof of Theorem 1}
\label{app:proof:thm:1}
\begin{proof}
Note that BLP inversion of the form $\hat{\xi}(s;\theta)$ for fixed parameters $\theta$ is a continuous function of $s$. Hence, if $\hat{s}_{m}(Z_0)\Convprob s_{m}(Z_0)$, we have $\hat{\xi}(\hat{s}_{m}(Z_0);\theta(Z_0))\Convprob \hat{\xi}(s_{m}(Z_0);\theta(Z_0))$ uniformly. Denote by $\Xi_{m}(\theta(Z_0))\Convprob 0$ the difference between $\hat{\xi}(\hat{s}_{m}(Z_0);\theta(Z_0))$ and $\hat{\xi}(s_{m}(Z_0);\theta(Z_0))$. The moment driven objective function satisfies the condition:
  \begin{eqnarray*}
    \dfrac{1}{M}\sum\limits_{m=1}^MH(\theta(Z_0); \xi_{m}(Z_0), X_{m}, W_{m})\Convprob 0
  \end{eqnarray*}
  and has two important properties: (i) it is continuous in $\xi$, (ii) it satisfies the identification assumption,\footnote{For the proof that steps 4--5 of the proposed procedure lead to consistent estimators as $M\longrightarrow\infty$ see, for example, \cite*{freyberger2015asymptotic}.} meaning that:
  \begin{eqnarray*}
    \dfrac{1}{M}\sum\limits_{m=1}^MH(\theta; \hat{\xi}(s_m;\theta), X_{m}, W_{m})\Convprob 0, \quad \text{if } \theta = \theta_0,
  \end{eqnarray*}
  and converges to non-zero value otherwise. We thus have:
  \begin{multline*}
  	\dfrac{1}{M}\sum\limits_{m=1}^MH(\theta; \hat{\xi}(\hat{s}_m;\theta), X_{m}, W_{m}) = \\
  	\dfrac{1}{M}\sum\limits_{m=1}^MH(\theta; \hat{\xi}(s_m;\theta) + \Xi_m(\theta(Z_0)), X_{m}, W_{m})\Convprob \\
    \lim\limits_{M\to\infty}\dfrac{1}{M}\sum\limits_{m=1}^MH(\theta; \hat{\xi}(s_m;\theta), X_{m}, W_{m}) = 0
  \end{multline*}
  as $\Xi_{jm}(\theta)$ converges in probability to zero uniformly over $m$. To complete the proof we need to fix any neighborhood of $\theta(Z_0)$ and to observe that the last equation implies that $\hat{\theta}(Z_0)$ must lie inside this neighborhood for large enough $M$. 
\end{proof}

\section{Proof of Theorem 2}
\label{app:proof:thm:2}
\begin{proof}
	By definition of $\hat{\theta}[Z]$:
	\begin{eqnarray*}
		\dfrac{1}{M}\sum\limits_{m=1}^MH(\hat{\theta},\hat{\xi}(\hat{s}_m,\hat{\theta},X_m)),X_m,W_m)[Z] = 0
	\end{eqnarray*}
	Using mean value theorem, for some $(\tilde{\theta}, \tilde{s})\Convprob(\theta, s)$:
	\begin{align*}
		\dfrac{1}{M}\sum\limits_{m=1}^M&H(\theta,\hat{\xi}(s_m,\theta,X_m),X_m,W_m)[Z] + \\
		&\dfrac{1}{M}\sum\limits_{m=1}^MH_{\theta}(\tilde{\theta},\hat{\xi}(\tilde{s}_m,\tilde{\theta},X_m),X_m,W_m)(\hat{\theta}-\theta)[Z] +\\
		&\dfrac{1}{M}\sum\limits_{m=1}^{M}H_{\xi}(\tilde{\theta},\hat{\xi}(\tilde{s}_m,\tilde{\theta},X_m),X_m,W_m)\hat{\xi}_{s}(\tilde{s}_m,\tilde{\theta},X_m)(\hat{s}_m-s_m)[Z] + \\
		&\dfrac{1}{M}\sum\limits_{m=1}^{M}H_{\xi}(\tilde{\theta},\hat{\xi}(\tilde{s}_m,\tilde{\theta},X_m),X_m,W_m)\hat{\xi}_{\theta}(\tilde{s}_m,\tilde{\theta},X_m)(\hat{\theta}-\theta)[Z] = 0.
	\end{align*}
	Solving for $\sqrt{M}(\hat{\theta}-\theta)[Z]$, we get:
	\begin{eqnarray*}
		\sqrt{M}(\hat{\theta}-\theta)[Z] \Convprob -\sqrt{M}(\bar{H}_{\theta}+\overline{H_{\xi}\xi_{\theta}})^{-1}\bar{H}[Z] - (\bar{H}_{\theta}+\overline{H_{\xi}\xi_{\theta}})^{-1}\sqrt{M}\psi[Z].
	\end{eqnarray*}
	Finally, $\E[\sqrt{M}\psi] = \sqrt{M}\E\psi\Convprob0$ by (i) and $\V[\sqrt{M}\psi] = M\V\psi\Convprob0$ by (ii). This means that $(\bar{H}_{\theta}+\overline{H_{\xi}\xi_{\theta}})^{-1}\sqrt{M}\psi[Z]\Convprob0$, giving the result.
\end{proof}

\section{Proof of Theorem 5}
\begin{proof}
	\label{app:proof:thm:3}
	Note that $\dfrac{1}{\sqrt{N_m}}\sum\limits_{i=1}^{N_m}d_{mi}Z_i^u\to_d\mathcal{N}(0, V_{mu})$, or, alternatively, $\zeta_{mu} = \dfrac{1}{\sqrt{V_{mk}N_m}}\sum\limits_{i=1}^{N_m}d_{mi}Z_i^u\to_d\mathcal{N}(0,1)$ if $u\in U'$. Maximal inequalities for sub-Gaussian random variables\footnote{See \cite{highdimensionalstats} for an excellent exposition of the topic.} imply, in particular:
	\begin{eqnarray*}
		\P(\max\limits_{u\in U'}|\zeta_{mu}|>x) \le 2|U'|\exp{-\dfrac{x^2}{2}} \le 2p\exp{-\dfrac{x^2}{2}}.
	\end{eqnarray*}
	Let this probability decay as $\dfrac{2}{N_m}$. Then the corresponding $c_0$ can be found as a solution to the equation:
	\begin{eqnarray*}
		\dfrac{2}{N_m} = 2p\exp{-\dfrac{c_0^2}{2}},
	\end{eqnarray*}
	which after simple algebra can be simplified to:
	\begin{eqnarray*}
		c_0 = \sqrt{2(\log p + \log N_m)}
	\end{eqnarray*}
	We thus can bound the probability that no $u\in U'$ will be selected. We now turn to a question what is the probability of not including at least a single $u\in U_m$ for the same threshold.
	\begin{eqnarray*}
		&\P(\min\limits_{u\in U_m}|\zeta_{mu}|\le c_0) \le \P(\min\limits_{u\in U_m}\{\zeta_{mu}-\E[\zeta_{mu}]\}\le c_0-c_1\sqrt{N_m}) =\\ 
		&\P(\max\limits_{u\in U_m}\{\zeta_{mu}-\E[\zeta_{mu}]\}\ge c_1\sqrt{N_m}-c_0) \le p_0\exp{-\dfrac{(c_1\sqrt{N_m}-c_0)^2}{2}}.
	\end{eqnarray*}
	Plugging in the expressions for $c_0$ and $c_1$, we have (since $\log p = o(\min(N_m))$):
	\begin{eqnarray*}
		p_0\exp{-\dfrac{(c_1\sqrt{N_m}-\sqrt{2(\log K + \log N_m)})^2}{2}}\le p_0\exp{-\dfrac{c_1^2N_m}{8}}.
	\end{eqnarray*}
	To complete the proof, note that the probability of recovering correct $U_m$ \emph{in all markets at the same time} is at least:
	\begin{eqnarray*}
		\P(\hat{U}_1=U_1,\dots,\hat{U}_M=U_M) \ge 1 - \dfrac{2M}{\min N_m} - p_0M\exp{-\dfrac{c_1^2N_m}{8}}.
	\end{eqnarray*}
	As $M=o(\min(N_m))$, $\dfrac{2M}{\min N_m}\to0$. Also:
	\begin{eqnarray*}
		p_0M\exp{-\dfrac{c_1^2N_m}{8}} = \exp{\log(p_0)+\log(M)-\dfrac{c_1^2N_m}{8}}\to0
	\end{eqnarray*}

	Once $U_m$'s have been recovered, we can construct $\hat{U} = \bigcup\limits_{m}\hat{U}_m$ that is equal to $\bigcup\limits_{m}U_m=U$ with probability converging to 1. As a consequence $\hat{\theta}(\hat{U})\to_p\hat{\theta}(U)$.
\end{proof}

\end{document}